\pgfplotsset{compat=1.13}
\newsavebox{\measure@tikzpicture}
  \def\tikz@width{#1}%
\newtheorem{theorem}{Theorem}
\newtheorem{lemma}[theorem]{Lemma}
\newtheorem{definition}{Definition}
\newtheorem{fact}{Fact}
\def\e{\ensuremath{{\xi}}}
\newcommand{\bbI}{\ensuremath{\mathbb{I}}}
\newcommand{\cD}{\ensuremath{\mathcal{D}}}
\newcommand{\cX}{\ensuremath{\mathcal{X}}}
\newcommand{\cY}{\ensuremath{\mathcal{Y}}}
\newcommand{\egt}{EGT}
\newcommand{\efg}{EFG}
\def\epsilonsad{{\epsilon_{\hbox{\scriptsize\rm sad}}}}
\newcommand{\be}{\begin{eqnarray}}
\newcommand{\ee}[1]{\label{#1}\end{eqnarray}}
\newcommand{\ese}{\end{eqnarray*}}
\newcommand{\bse}{\begin{eqnarray*}}
\def\beq{\begin{equation}}
\def\eeq{\end{equation}}
\def\fnote#1{\footnote}
\def\*{{{\LARGE\bf $^*$}}}
\def\R{{\mathbb{R}}}
\def\cD{{\cal D}}
\def\cX{{\cal X}}
\def\cY{{\cal Y}}
\def\ri{{\mathop{\rm ri}\,}}
\def\log{\mathop{{\rm log}}}
\newcommand{\dgf}{DGF}
\title{Smoothing Method for Approximate Extensive-Form Perfect Equilibrium}
\author{Christian Kroer \and Gabriele Farina \and Tuomas Sandholm\\
Computer Science Department\\ Carnegie Mellon University\\
{\{ckroer,gfarina,sandholm\}@cs.cmu.edu}
}
\begin{document}

\maketitle

\begin{abstract}
Nash equilibrium is a popular solution concept for solving
imperfect-information games in practice. However, it has a major drawback: it
does not preclude suboptimal play in branches of the game tree that are not
reached in equilibrium. Equilibrium refinements can mend this issue, but have
experienced little practical adoption. This is largely due to a lack of
scalable algorithms.

Sparse iterative methods, in particular first-order methods, are known to be
among the most effective algorithms for computing Nash equilibria in large-scale
two-player zero-sum extensive-form games. In this paper, we provide, to our
knowledge, the first extension of these methods to equilibrium refinements. We
develop a smoothing approach for behavioral perturbations of the convex polytope
that encompasses the strategy spaces of players in an extensive-form game. This
enables one to compute an approximate variant of extensive-form perfect
equilibria. Experiments show that our smoothing approach leads to solutions with
dramatically stronger strategies at information sets that are reached with low
probability in approximate Nash equilibria, while retaining the overall
convergence rate associated with fast algorithms for Nash equilibrium. This has
benefits both in approximate equilibrium finding (such approximation is
necessary in practice in large games) where some probabilities are low while
possibly heading toward zero in the limit, and exact equilibrium computation
where the low probabilities are actually zero.
\end{abstract} 
\section{Introduction}
Nash equilibrium is the basic solution concept for noncooperative
games, including \emph{extensive-form games (EFGs)}, a broad class of games that model
sequential and simultaneous interaction, imperfect information, and outcome
uncertainty~\cite{Sandholm10:State,Bowling15:Heads-up,Brown15:Hierarchical,Moravcik17:DeepStack}.
Nash equilibrium was the solution concept used in the \emph{Libratus} agent, which
showed superhuman performance against a team of top Heads-Up No-Limit Texas
hold'em poker specialist professional players in the \emph{Brains vs. AI} event in January 2017~\cite{Brown17:Safe_arxiv}. It was
also used in the \emph{DeepStack} agent \cite{Moravcik17:DeepStack}, which beat a
group of professional players. It has also been dominant in the \emph{Annual
  Computer Poker Competition}~[ACPC], where the winning agents 
have all been based on Nash equilibrium approximation for many years.

In spite of this popularity, Nash equilibria suffer from a major deficiency:
they might not play reasonably in parts of the game tree that are reached with
zero probability in equilibrium. In particular, the only guarantee that Nash
equilibrium gives in these parts of the game tree is that it does not give up
more utility than the value of the game. Thus, if the opponent makes a big
mistake, Nash equilibrium might give back all the utility gained from the
opponent making that mistake, since it is only maintaining the value of the
game~(Miltersen and S\o rensen~\shortcite{Miltersen10:Computing} show nice examples of such
behavior).

The above shows that Nash equilibrium is not satisfactory in extensive-form
games, and is the motivation for equilibrium
refinements~\cite{Selten75:Reexamination}. When information is perfect, the
classical solution concept of \emph{subgame-perfect equilibrium (SPE)} can be
satisfactory, while it is not when information is imperfect. In this latter
case, refinements are usually based on the idea of perturbations representing
mistakes of the players. In a \emph{quasi-perfect equilibrium
  (QPE)}~\cite{VanDamme84:Relation}, a player maximizes her utility in each
decision node taking into account the future mistakes of the opponents only,
whereas, in an \emph{extensive-form perfect equilibrium (EFPE)}, players maximize
their utility in each decision node taking into account the future mistakes of
both themselves and their
opponents~\cite{Selten75:Reexamination,Hillas02:Foundations}.

Computation of Nash equilibrium
refinements in {EFG}s has received some attention in the literature. Von Stengel et. al.~\shortcite{Stengel02:Computing_b}
give a pivoting algorithm for computing normal-form-perfect equilibria in
{EFG}s. Miltersen and S\o rensen~\shortcite{Miltersen10:Computing} give an algorithm
for computing quasi-perfect equilibria. Miltersen and S\o rensen~\shortcite{Miltersen08:Fast} show how to compute a normal-form-proper
equilibrium. Farina and Gatti~\shortcite{Farina17:Extensive_Form} give an
algorithm for computing extensive-form perfect equilibria. All these results rely
on linear programming (LP) (in the zero-sum case) or linear complementary
programming (LCP).
%,
%solved using pivoting algorithms such as Lemke or
%Lemke-Howson\cite{Lemke64:Equilibrium,Lemke70:Recent}.
In zero-sum games, several of these solution concepts can be computed in
polynomial time using an LP or a series of LPs. However, even for the easier
case of Nash equilibria, the LP approach is not scalable for large games
(beyond roughly $10^8$ nodes in the game tree~\cite{Gilpin07:Lossless}).
Each iteration of an LP-solving algorithm is expensive, and the LP might even be
too large to fit in memory.
In practice, iterative methods are preferred, even for games of modest size.
These methods have iteration costs that are usually linear, or better, in the
game size, but converge to a Nash equilibrium only in the limit. The most
prominent of these methods are \emph{counterfactual regret minimization (CFR)}~\cite{Zinkevich07:Regret} and its
variants~\cite{Lanctot09:Monte,Tammelin15:Solving,Brown15:Regret-Based,Brown17:Reduced}, and
general first-order methods (FOMs) such as the \emph{excessive gap technique}
(EGT)~\cite{Nesterov05:Excessive}  instantiated with an appropriate EFG
\emph{smoothing
  technique}~\cite{Hoda10:Smoothing,Kroer15:Faster,Kroer17:Theoretical}.
Farina \textit{et al.}~\shortcite{Farina17:Regret} show how to extend CFR to approximate EFPEs.

In this paper, we show how to extend FOMs to the computation of an approximate
variant of EFPE. Miltersen and S\o rensen~\shortcite{Miltersen10:Computing} and Farina and Gatti~\shortcite{Farina17:Extensive_Form} presented perturbed polytopes of EFGs that capture equilibrium refinements where each action has to be played with positive probability. We prove that recent results on smoothing techniques for EFGs based on dilating the entropy function can be modified to provide
smoothing for such perturbed games, where the perturbations are with respect to
\emph{behavioral strategies}. We then instantiate this method for the perturbed game of Farina and Gatti, which leads to our approximate EFPE.

We then experimentally validate our method. We show that it is effective at obtaining low maximum regret at each information set of the game---even ones that have low probability of being reached---while simultaneously achieving the same practical convergence rate that FOMs and the best CFR variants traditionally achieve for just Nash equilibrium. This has benefits both in approximate Nash equilibrium finding (such approximation is necessary in practice in large games) where some probabilities are low while possibly heading toward zero in the limit, and exact Nash equilibrium computation where the low probabilities are actually zero.

\section{Preliminaries}

We assume that the reader is familiar with the classical concept of extensive-form game. We invite the reader unfamiliar with the topic to refer to Shoham and Leyton-Brown~\shortcite{Shoham08:Multiagent} or any classic textbook on the subject for further information and context. Briefly, an extensive-form game $\Gamma$ is defined over a game tree. In each non-terminal node a single player moves and each edge corresponds to an action available to the player. Each leaf node is associated with a payoff vector, representing the utility for the two players when the game finishes in the leaf.

A Nash equilibrium is defined in Definition~\ref{def:nash}.

\begin{definition}
  An $\epsilon$-NE is a strategy profile $(\pi_1, \pi_2)$ for the players, such that no player can gain more than $\epsilon$ by unilaterally deviating from their strategy.
\end{definition}

\begin{definition}\label{def:nash}
  A Nash equilibrium (NE) is a 0-NE.
\end{definition}

However, Nash equilibria might not be satisfactory when
dealing with {\efg}s, independently of whether the game has perfect
or imperfect information, and whether it is general- or zero-sum. A Nash
equilibrium $\pi$ might prescribe irrational play in those information sets that
are visited with zero probability when playing according to $\pi$
(e.g.,~\cite{Miltersen08:Fast}). In the general-sum case, consider the left example of Figure~\ref{fig:nash inappropriate}: the strategy profile $(\pi_1,\pi_2)$ where player 1 always chooses action \textsl{x} and player 2 always chooses action \textsf{y} is a NE. However, this strategy profile is irrational: Player 2 is ``threatening'' to play a suboptimal action, and Player 1 is caving in to the threat. Yet, the threat is not credible: if Player 1 were to actually play action \textsf{y}, it would be irrational for Player 2 to honor the threat.

\begin{figure}[H]
 \centering \begin{tikzpicture}[scale=.8]
    \fill (0,0) circle (.9mm) node[right=3mm] {Player 1};
    \fill (-.9, -1) circle (.9mm) node[below=2mm] {$(1, 5)$};   
    \fill (1, -1) circle (.9mm);   
    \draw[ultra thick] (-.9, -1) --node[left]{\textsf{x}} (0,0 );
    \draw (0, 0) --node[right]{\textsf{y}} (1,-1) node[right=3mm] {Player 2};
    \fill (0.3, -2) circle (.9mm) node[below=2mm] {$(5,1)$};
    \fill (1.7, -2) circle (.9mm) node[below=2mm] {$(0, 0)$};
    \draw (0.3, -2) --node[left]{\textsf{x}} (1, -1);
    \draw[ultra thick] (1, -1) --node[right]{\textsf{y}} (1.7, -2);
    % \node at (.4, -3.2) {(a)};
  \end{tikzpicture}
  \hfill\begin{tikzpicture}[scale=.8]
   \fill (0,0) circle (.9mm) node[right=3mm] {Player 1};
   \fill (-.9, -1) circle (.9mm) node[below=2mm] {$(1, -1)$};   
   \fill (1, -1) circle (.9mm);   
   \draw[ultra thick] (-.9, -1) --node[left]{\textsf{x}} (0,0);
   \draw (0, 0) --node[right]{\textsf{y}} (1,-1) node[right=3mm] {Player 2};
   \fill (0.3, -2) circle (.9mm) node[below=2mm] {$(-5,5)$};
   \fill (1.7, -2) circle (.9mm) node[below=2mm] {$(0, 0)$};
   \draw (0.3, -2) --node[left]{\textsf{x}} (1, -1);
   \draw[ultra thick] (1, -1) --node[right]{\textsf{y}} (1.7, -2);
   % \node at (.4, -3.2) {(b)};
 \end{tikzpicture}
 \vspace{-3mm}
 \caption{General-sum (left) and zero-sum (right) games where Nash equilibrium prescribes irrational play. Numbers in parentheses denote the payoffs to Players 1 and 2.}
 \label{fig:nash inappropriate}
 \vspace{-3mm}
\end{figure}
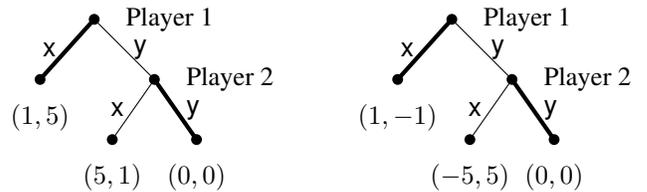

The right example in Figure~\ref{fig:nash inappropriate} shows that even in zero-sum games, a NE can fail to capture (sequential) rationality.
In this game, the same strategy profile as in the previous game is again a NE.
If Player 2 plays according to this profile, she gives up a potential payoff of 5 if Player 1 plays action
\textsf{y}. 

\subsection{Perturbations and Extensive-Form Perfection}
A way to mend the issue just described is to introduce the idea of ``trembling
hands'': each player cannot fully commit to a pure strategy, and ends up making
mistakes with a small (yet strictly positive) probability. This guarantees that
the whole game tree gets visited. More formally, let $l(h, a)$ be the
\emph{perturbation} of the game, a (positive) function defining the minimum
amount of probability mass with which the player playing at information set $h$
in the game will select action $a$ when playing in $h$. Let $\Gamma_l$ be the
game where players are subject to such perturbation: an extensive-form perfect
equilibrium of the game $\Gamma$ is any limit point of the sequence of Nash
equilibria of the game $\Gamma_l$, as $l$ vanishes~\cite{Selten75:Reexamination}.
In this paper, we deal with the simplest form of perturbation -- a uniform perturbation $l_\e$ for $\e > 0$, defined as $l_{\e}(h, a) = \e$ for all $a$ and $h$. We will denote the game $\Gamma_{l_\e}$ as $\Gamma_\e$.
\subsection{Bilinear Saddle-Point Problems and the Sequence Form}
It is well-known that the strategy spaces of an extensive-form game can be transformed into convex polytopes that allow a bilinear saddle-point formulation (BSPP) of the Nash equilibrium problem as follows~\cite{Romanovskii62:Reduction,Stengel96:Efficient,Koller96:Efficient}.
\begin{equation}
  \label{eq:sequence_form_objective} \min_{x \in \cX} \max_{y \in \cY} \langle
x, Ay \rangle = \max_{y \in \cY} \min_{x \in \cX} \langle x,Ay \rangle
\end{equation}
Our approach for computing equilibrium refinements will be based on constructing a perturbed variant of $\cX$ and $\cY$.

% In its most general form, a bilinear saddle-point problem is of the form
% \begin{align}
%   \label{eq:bspp}
%   \min_{x\in\cX}\max_{y\in\cY} \phi(x,y) = \max_{y\in\cY}\min_{x\in\cX} \phi(x,y),
% \end{align}
% where $\phi(x,y) = v + \langle a_1, x \rangle + \langle a_2, y \rangle\langle x,
% Ay \rangle$ and $\cX,\cY$ are convex, compact sets in Euclidean spaces
% $E_x,E_y$. We let $\cZ=\cX\times \cY$ denote the Cartesian product of $\cX,\cY$
% so that $\phi(x,y):\cZ \rightarrow \R$. For the case of EFGs (\ref{eq:bspp})
% takes the form of (\ref{eq:sequence_form_objective}) so that $\phi(x,y)=\langle
% x, Ay \rangle$ where $A$ is the sequence-form payoff matrix and $\cX,\cY$ are
% the sequence-form strategy spaces of the players.

Several FOMs with attractive convergence properties
have been
introduced for BSPPs~\cite{Nesterov05:Smooth,Nesterov05:Excessive,Nemirovski04:Prox,Chambolle11:First}.
These methods rely on having some appropriate distance measure over
$\cX$ and $\cY$, called a \emph{distance-generating function} (DGF). Generally, FOMs
use the DGF to choose steps: given a gradient and a scalar stepsize, a FOM moves
in the negative gradient direction by finding the point that minimizes the sum of the gradient and of the DGF evaluated at the new point. In other words, the next step can be found by solving a regularized optimization problem, where long gradient steps are discouraged by the DGF. For EGT on EFGs, the DGF can be
interpreted as a smoothing function applied to the best-response problems faced
by the players.

\begin{definition}
  A distance-generating function for $\cX$ is a function $d(x):\cX \rightarrow
  \R$ which is convex and continuous on $\cX$, admits continuous selection of
  subgradients on the set $\cX^\circ=\left\{ x\in\cX: \partial d(x) \ne \emptyset
  \right\}$, and is strongly convex modulus $\varphi$ w.r.t. $\|\cdot\|$.
  Distance-generating functions for $\cY$ are defined analogously.
\end{definition}

Given a twice differentiable function $f$, we let $\nabla^2f(z)$ denote its Hessian at $z$. Our analysis is based on the following sufficient condition for strong convexity of a twice differentiable function:
\begin{fact} \label{fac:strong_convexity_hessian}
  A twice-differentiable function $f$ is strongly convex with modulus $\varphi$ with respect to a norm $\|\cdot\|$ on nonempty convex set $C\subset\R^n$ if 
$
  h^\top \nabla^2f(z) h \geq \varphi\|h\|,\ \forall h\in\R^n, z\in C^\circ.
$
\end{fact}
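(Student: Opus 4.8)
The plan is to obtain the strong-convexity inequality directly from a second-order Taylor expansion of $f$ along line segments. Fix a base point $x$ with $\partial f(x)\ne\emptyset$ and an arbitrary $x'\in C$, and consider the scalar function $\phi(t)=f(x+t(x'-x))$ for $t\in[0,1]$; since $C$ is convex the segment stays in $C$, and since $f$ is twice differentiable $\phi$ is $C^2$ with $\phi'(t)=\langle\nabla f(x+t(x'-x)),\,x'-x\rangle$ and $\phi''(t)=(x'-x)^\top\nabla^2 f(x+t(x'-x))\,(x'-x)$. The first step is the integral form of Taylor's theorem,
\[
  f(x')-f(x)-\langle\nabla f(x),\,x'-x\rangle \;=\; \phi(1)-\phi(0)-\phi'(0) \;=\; \int_0^1 (1-t)\,\phi''(t)\,dt .
\]

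The second step invokes the hypothesis. Reading the stated bound as $h^\top\nabla^2 f(z)h\ge\varphi\|h\|^2$ (homogeneity of both sides in $h$ forces the square), we get $\phi''(t)\ge\varphi\|x'-x\|^2$ for every $t$ with $x+t(x'-x)\in C^\circ$, hence $\int_0^1 (1-t)\,\phi''(t)\,dt\ge\varphi\|x'-x\|^2\int_0^1(1-t)\,dt=\tfrac{\varphi}{2}\|x'-x\|^2$. Combining with the first step gives $f(x')\ge f(x)+\langle\nabla f(x),\,x'-x\rangle+\tfrac{\varphi}{2}\|x'-x\|^2$. Since a differentiable convex function satisfies $\partial f(x)=\{\nabla f(x)\}$, this is exactly the defining inequality of strong convexity with modulus $\varphi$ with respect to $\|\cdot\|$, with $\nabla f(x)$ serving as the continuous selection of subgradients.

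The only delicate point -- and the step I expect to be the main, though minor, obstacle -- is that the Hessian estimate is assumed only on $C^\circ$, so the computation above applies verbatim when the open segment $(x,x')$ lies in $C^\circ$, in particular when $x,x'\in\ri C$ (a convex set contained in $C^\circ$). To cover an arbitrary endpoint $x'\in C$, as the \dgf{} definition requires, one passes to the limit: choose $x'_k\to x'$ with $x'_k\in C^\circ$ and $(x,x'_k)\subseteq C^\circ$, apply the established inequality to each $x'_k$, and let $k\to\infty$ using continuity of $f$ on $C$. This yields the inequality for all $x'\in C$ and all admissible base points $x$, completing the argument. The core computation is the routine Taylor expansion; everything else is bookkeeping about where the Hessian estimate is valid.
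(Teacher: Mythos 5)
Your proof is correct, and it is the standard argument for this classical sufficient condition: the paper states this as a \emph{Fact} without proof, so there is nothing to diverge from. The integral-remainder Taylor expansion along segments, the pointwise Hessian lower bound, and the resulting first-order inequality $f(x')\ge f(x)+\langle\nabla f(x),x'-x\rangle+\tfrac{\varphi}{2}\|x'-x\|^2$ are exactly what one would write, and your two side remarks are both apt: the right-hand side of the hypothesis must be $\varphi\|h\|^2$ (the $\|h\|$ in the statement is a typo, as homogeneity in $h$ forces the square), and the restriction of the Hessian bound to $C^\circ$ is correctly handled by working on the relative interior and passing to the limit using continuity of $f$ on $C$.
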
 

% For a twice-differentiable function, a sufficient condition for strong convexity with respect to a norm $\|\cdot\|$ is 
% $
%   h^\top \nabla^2d(x) h \geq \varphi\|h\|,
% $ for all $h\in\R^n$ and $x\in \cX^\circ$.

Given DGFs $d_{\cX},d_{\cY}$ for $\cX,\cY$ with strong convexity moduli $\varphi_{\cX}$ and $\varphi_{\cY}$ respectively, we now describe the Excessive Gap Technique (EGT)~\cite{Nesterov05:Excessive} applied to \eqref{eq:sequence_form_objective}. EGT forms two
smoothed functions using the DGFs
\begin{align}
\vspace{-1mm}
  f_{\mu_y}(x) = \max_{y\in \cY} \langle x, Ay \rangle - \mu_\cY d_\cY, \label{eq:smoothed_y}\\
  \phi_{\mu_x}(y) = \min_{x\in \cX} \langle x, Ay \rangle + \mu_\cX d_\cX .\label{eq:smoothed_x}
\vspace{-1mm}
\end{align}
These functions are smoothed approximations to the optimization problem faced by
the $x$ and $y$ player, respectively. The scalars $\mu_1,\mu_2>0$ are smoothness
parameters denoting the amount of smoothing applied. Let $y_{\mu_2}(x)$ and
$x_{\mu_1}(y)$ refer to the $y$ and $x$ values attaining the optima in
\eqref{eq:smoothed_y} and \eqref{eq:smoothed_x}. These can be thought of as
\emph{smoothed best responses}. Nesterov~\shortcite{Nesterov05:Smooth} shows that the
gradients of the functions $f_{\mu_2}(x)$ and
$\phi_{\mu_1}(y)$ exist and are Lipschitz continuous. The gradient
operators and Lipschitz constants are given as follows

{\centering
  $\displaystyle\nabla f_{\mu_2}(x) = a_1 + Ay_{\mu_2}(x), \quad \nabla \phi_{\mu_1}(y) = a_2 + A^\top x_{\mu_1}(y),$
  $\displaystyle L_1\left(f_{\mu_2}\right) = \frac{\|A\|^2}{\varphi_\cY\mu_2}$ and $\displaystyle L_2\left(\phi_{\mu_1}\right) = \frac{\|A\|^2}{\varphi_\cX\mu_1}$.\\
}

Let the convex conjugate of $d:Q\rightarrow \R$ be denoted by $d^*(q)=\max_{q\in Q}g^Tq - d(q)$. Based on this setup, we formally state  {\egt}~\cite{Nesterov05:Excessive} as
Algorithm~1.

% \begin{algorithm}[h!]
%  \caption{EGT}
%  \label{alg:egt}
% \SetKwInOut{Inp}{input}
% \SetKwInOut{Output}{output}
%% \begin{algorithmic}[1]
% \Inp{$\omega$-center $z_\omega$, \dgf\ weights $\mu_1,\mu_2$, and $\epsilon>0$}
% \Output{$z^t(=[x^t;y^t])$}
% $x^{0} = \nabla d_{\cX}^*\left( \mu_1^{-1} \nabla f_{\mu_2}(x_{\omega})\right),\ y^0 = y_{\mu_2}(x_{\omega})$\;
% $t = 0; z_1\coloneqq z_\omega$\;
% \While{$\epsilonsad(z^t)>\epsilon$}{
%   {$\tau_t = \frac{2}{t+3}$\;}
%   \uIf{$t$ is even}{
%     $(\mu_1^{t+1},x^{t+1},y^{t+1}) = Step(\mu_1^{t}, \mu_2^t, x^t, y^t, \tau)$
%   }
%   \Else{
%     $(\mu_2^{t+1},y^{t+1},x^{t+1}) = Step(\mu_2^t, \mu_1^t, y^t, x^t, \tau)$
%   }
%   {$t=t+1$\;}
% }
% \end{algorithm}
% \begin{algorithm}[h!]
%   \caption{Step}
%   \label{alg:step}
% \SetKwInOut{Inp}{input}
% \SetKwInOut{Output}{output}
%% \begin{algorithmic}[1]
% \Inp{$\mu_1,\mu_2, x, y, \tau$}
% \Output{$\mu_1^+,x_+,y_+$}
% $\hat{x} = \left( 1-\tau \right)x + \tau x_{\mu_1}(y),\ y_+ = \left( 1-\tau \right)y + \tau y_{\mu_2}(\hat{x})$\;
% $\tilde{x} = \nabla d_\cX^*{}\left(\nabla d_{\cX}(x_{\mu_1}(y)) - \frac{\tau}{\left( 1-\tau \right)\mu_1} \nabla f_{\mu_2}(\hat{x})\right)$\;
% $x_+ = \left( 1-\tau \right)x + \tau \tilde{x}$\;
% $\mu_1^+ = \left( 1-\tau \right)\mu_1$\;
% \end{algorithm}
\begin{figure}
\centering
   \includegraphics[width=.9\linewidth]{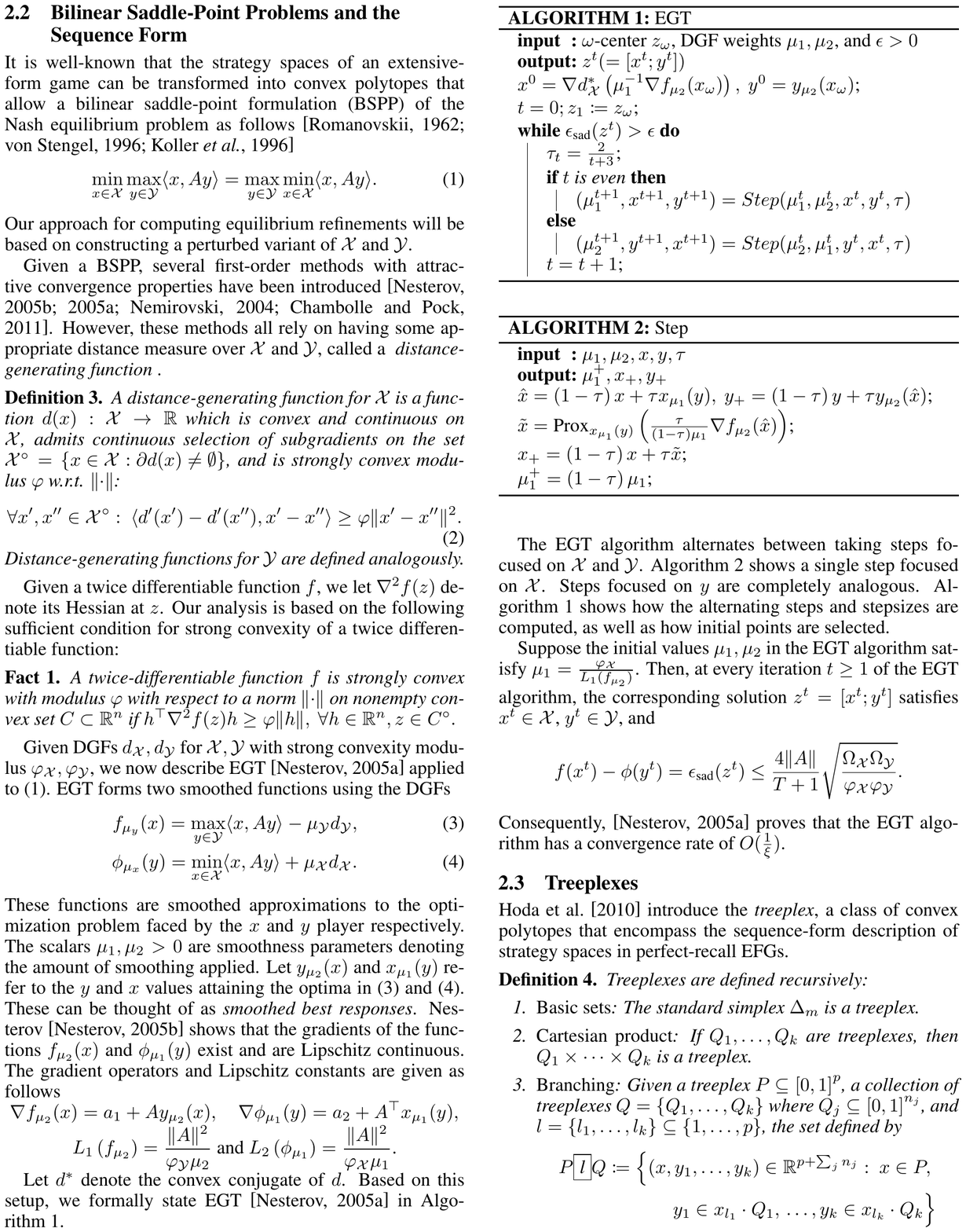}
   \includegraphics[width=.9\linewidth]{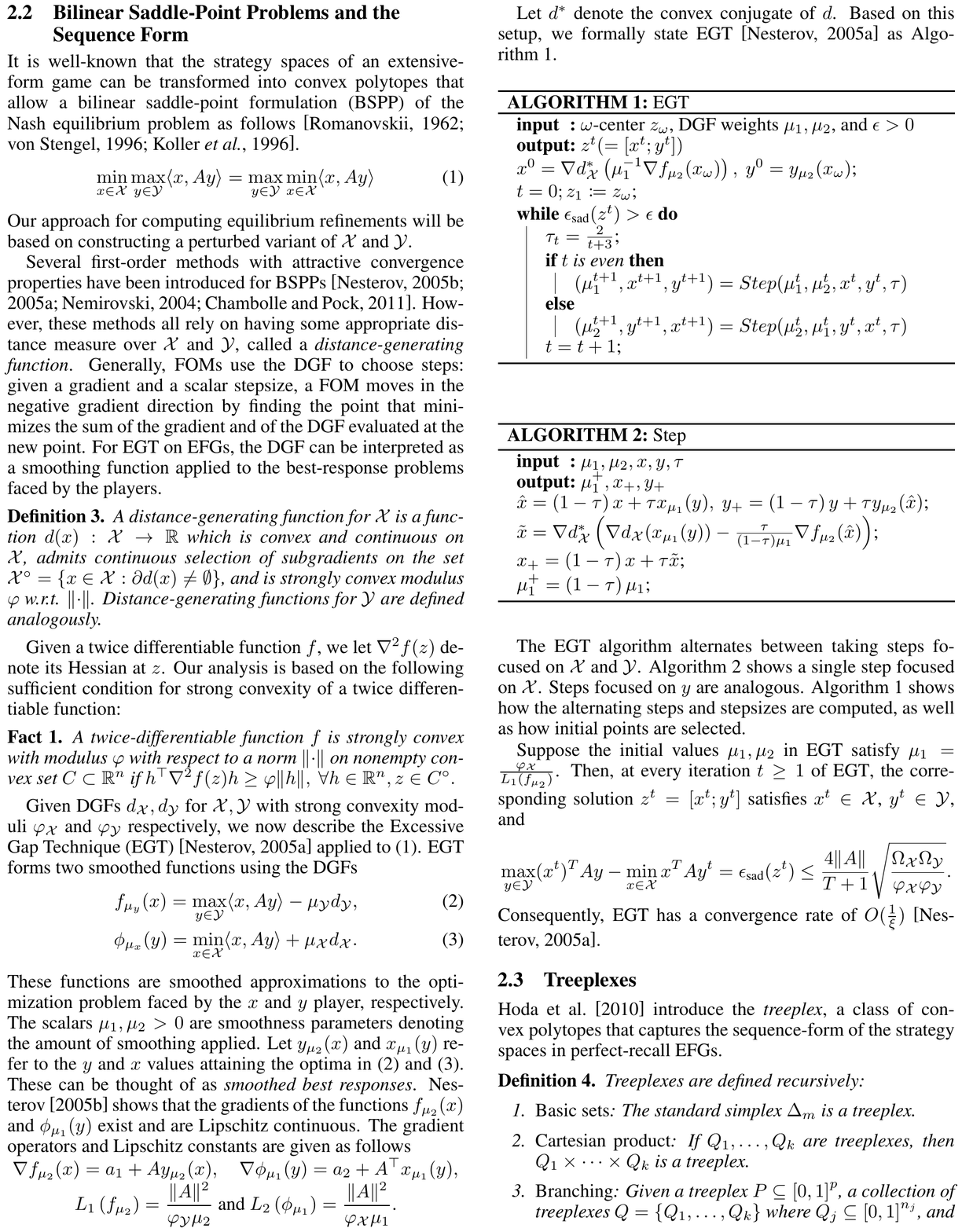}
\end{figure}

The \egt\ algorithm alternates between taking steps focused on $\cX$ and $\cY$. Algorithm~2 shows a single step focused on $\cX$. Steps focused on $y$ are analogous. Algorithm~1 shows how the alternating steps and stepsizes are computed, as well as how initial points are selected.

\noindent Suppose the initial values $\mu_1,\mu_2$ satisfy
$\mu_1=\frac{\varphi_\cX}{L_1(f_{\mu_2})}$. Then, at every iteration $t\geq 1$ of
 {\egt}, the corresponding solution $z^t=[x^t;y^t]$ satisfies
$x^t\in \cX$, $y^t\in \cY$, and
\[
\vspace{-1mm}
\max_{y\in\cY}(x^t)^TAy-\min_{x\in\cX}x^TAy^t= \epsilonsad(z^t) \leq  \frac{4\|A\|}{T+1}\sqrt{\frac{\Omega_\cX\Omega_\cY}{\varphi_\cX\varphi_\cY}}.
% f(x^t)-\phi(y^t) = \epsilonsad(z^t) \leq  \frac{4\|A\|}{T+1}\sqrt{\frac{\Omega_\cX\Omega_\cY}{\varphi_\cX\varphi_\cY}}.
\vspace{-1mm}
\]
Consequently, {\egt} has a convergence rate of $O(\frac{1}{\epsilon})$~\cite{Nesterov05:Excessive}.

\subsection{Treeplexes}
\label{sec:treeplexes}
Hoda et al.~\shortcite{Hoda10:Smoothing} introduce the {\em treeplex}, a class of convex polytopes that captures the sequence-form of the strategy spaces in perfect-recall {\efg}s. 
%We use the definition from \citet{Kroer15:Faster}:
\begin{definition}
  Treeplexes are defined recursively:
  \begin{enumerate}
  \item {\em Basic sets}: The standard simplex $\Delta_m$ %$\Delta_m = \left\{ x \in \left[0,1\right]^m  : \sum_{k=1}^m x_k = 1 \right\}$ 
  is a treeplex.
  \item {\em Cartesian product}:  If $Q_1,\ldots, Q_k$ are treeplexes, then $Q_1 \times \cdots \times Q_k$ is a treeplex.
  \item {\em Branching}: Given a treeplex $P\subseteq \left[ 0,1 \right]^p$, a collection of treeplexes $Q=\left\{ Q_1,\ldots,Q_k \right\}$ where $Q_j\subseteq \left[ 0,1 \right]^{n_j}$, and $l=\left\{l_1,\ldots,l_k \right\} \subseteq \left\{ 1,\ldots, p \right\}$, the set defined by 
\begin{align*}
\vspace{-1mm}
  P\framebox{l}Q \coloneqq \left\{ \left(x,y_1,\ldots,y_k\right) \in \R^{p+\sum_j n_j}  :~ x\in P, \right.\\[-1mm]
    \left. \, y_1\in x_{l_1} \cdot Q_1,\, \ldots, y_k\in x_{l_k} \cdot Q_k \vphantom{\R^\sum}\right\}
\vspace{-1mm}
\end{align*}
%    \begin{align*}
    %& P\framebox{l}Q \coloneqq \left\{ \left(x,y_1,\ldots,y_k\right) \in \R^{p+\sum_j q_j} \right. \\
      %& \left. :~ x\in P,\, y_1\in x_{l_1} \cdot Q_1,\, \ldots, y_k\in x_{l_k} \cdot Q_k \right\}
%    \end{align*}
    is a treeplex. 
    We say $x_{l_j}$ is the branching variable for the treeplex $Q_j$.
  \end{enumerate}
\end{definition}

For a treeplex $Q$, we denote by $S_Q$ the index set of the set of simplexes contained in $Q$ (in an \efg\ $S_Q$ is the set of information sets belonging to the player). For each $j\in S_Q$, the treeplex rooted at the $j$-th simplex $\Delta^j$ is referred to as $Q_j$. Given vector $q\in Q$ and simplex $\Delta^j$, we let $\bbI_j$ denote the set of indices of $q$ that correspond to the variables in $\Delta^j$ and define $q^j$ to be the subvector of $q$ corresponding to the variables in $\bbI_j$.  For each simplex $\Delta^j$ and branch $i\in \bbI_j$, the set $\cD_j^i$ represents the set of indices of simplexes reached immediately after $\Delta^j$ by taking branch $i$ (in an \efg, $\cD_j^i$ is the set of potential next-step information sets for the player). Given a vector $q\in Q$,  simplex $\Delta^j$, and index $i\in \bbI_j$, each child simplex $\Delta^k$ for every $k\in \cD_j^i$ is scaled by $q_i$. For a given simplex $\Delta^j$, we let $p_j$ denote the index in $q$ of the parent branching variable $q_{p_j}$ scaling $\Delta^j$. We use the convention that $q_{p_j}=1$ if $Q$ is such that no branching operation precedes $\Delta^j$. For each $j\in S_Q$, $d_j$ is the maximum depth of the treeplex rooted at $\Delta^j$, that is, the maximum number of simplexes reachable through a series of branching operations at $\Delta^j$. Then $d_Q$ gives the depth of $Q$. We use $b_Q^j$ to identify the number of branching operations preceding the $j$-th simplex in $Q$. We say that a simplex $j$ such that $b_Q^j=0$ is a \emph{root simplex}.

Our analysis requires a measure of the size of a treeplex $Q$. Thus, we define
$M_Q\coloneqq \max_{q\in Q} \|q\|_1$. 

In the context of {\efg}s, suppose $Q$ encodes player 1's strategy space;  then $M_Q$ is the maximum number of information sets with nonzero probability of being reached when player 1 has to follow a pure strategy  while the other player may follow a mixed strategy. 
We also let 
\begin{equation}\label{eq:max_norm_cutoff}
\vspace{-1mm}
M_{Q,r}\coloneqq \max_{q\in Q} \sum_{j\in S_Q: b_Q^j \leq r} \|q^j\|_1. 
\vspace{-1mm}
\end{equation}
Intuitively, $M_{Q,r}$ gives the maximum value of the $\ell_1$ norm of any vector $q\in Q$ after removing the variables corresponding to simplexes that are not within $r$ branching operations of the root of $Q$.

We let $Q^\e$ refer to a $\xi$-perturbed variant of a treeplex $Q$, for the perturbed game $\Gamma_\xi$.
$Q^\e$ is the intersection of $Q$ with the set of constraints $q^j \geq \e
q_{p_j}$ for all $j\in S_Q$. By constructing perturbed polytopes $\cX^\e,\cY^\e$
and using these rather than $\cX,\cY$ in \eqref{eq:sequence_form_objective}, we
get an approximate variant of EFPEs.

\section{Distance-Generating Functions for the $\xi$-Perturbed Game}

Let $d_s$ be a DGF for the $n$-dimensional simplex $\Delta_n$. We construct a DGF for $Q$ by \emph{dilating} $d_s$ for each simplex in $S_Q$ and take their sum:
$
    d(q) = \sum_{j\in S_Q} \beta_j q_{p_j}d_s(\frac{q^j}{q_{p_j}})
$. This class of DGFs for treeplexes was introduced by Hoda et. al.~\shortcite{Hoda10:Smoothing} and has been further studied by Kroer et. al.~\shortcite{Kroer15:Faster,Kroer17:Theoretical}.
% corresponding DGF for $\cX$ constructed by applying the dilation scheme in
% Definition~\ref{def:dilation_scheme} for some choice of weights $\beta_j$.
We show that $d_s$ and $d$ can be used to implement a smoothing function for
$Q^\e$ and reason about its properties. To construct a smoothing function for
$Q^\e$, we first construct a smoothing function for an \emph{$\e$-perturbed
  simplex} $\Delta_n^\e=\left\{ q^s : \|q^s\|_1=1, q^s \geq \e \right\}$, with $\e >
0$. We construct a smoothing function for $\Delta_n^\e$ by composing $d_s$ with
a simple affine mapping $\phi({\tilde q^s})=\frac{{\tilde q^s} - \e}{1-n\e}$, which sets up a
one-to-one mapping between $\Delta_n$ and $\Delta_n^\e$. The inverse of this
function is $\phi^{-1}(q^s) = (1-n\e)q^s + \e$. We let $d_s^\e=d_s(\phi({\tilde q^s}))$. We
will show that $d_s^\e$ retains all nice {\dgf} properties of $d_s$.

Since $d_s$ is continuously differentiable, we can apply the chain rule to get
\begin{align}
\label{eq:d_eps}  
\nabla d_s^\e(q^s) = (1-n\e)^{-1} \nabla d_s({\tilde q^s}).
\end{align}
% which shows that $d()$ is continuously differentiable on $\Xeps$ as
% desired.
For our new {\dgf} to be practical we need the conjugate and its gradient to be
easily computable. We show that this reduces to a simple transformation of the
conjugate of $d_s$:
\begin{lemma}
  \label{lem:simplex_prox}
  For a simplex {\dgf} $d_s$ and its $\e$-perturbed variant $d_s^\e$, the convex conjugate and its gradient for $d_s^\e$ can be computed as
  \begin{equation*}
  \vspace{-1mm}
    d_s^{\e,*}(g) = d_s^*((1-n\e)g) + \langle g,\e \rangle
  \end{equation*}
  \begin{equation*}
    \nabla d_s^{\e,*}(g) = (1-n\e)\nabla d_s^*((1-n\e)g) + \e
    \vspace{-1mm}
  \end{equation*}
\end{lemma}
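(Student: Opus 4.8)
The plan is to reduce both identities directly to the known conjugate $d_s^*$ of the base simplex DGF, exploiting the affine change of variables already introduced in the construction: the map $\phi^{-1}(u)=(1-n\e)u+\e$ is a bijection from the standard simplex $\Delta_n$ onto $\Delta_n^\e$ (throughout, a scalar added to a vector denotes that scalar times $\mathbf{1}$), and by construction $d_s^\e(\phi^{-1}(u))=d_s(\phi(\phi^{-1}(u)))=d_s(u)$.

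First I would expand the conjugate from its definition over the domain $\Delta_n^\e$,
\[
  d_s^{\e,*}(g)=\max_{q^s\in\Delta_n^\e}\langle g,q^s\rangle-d_s^\e(q^s),
\]
and substitute $q^s=\phi^{-1}(u)$ with $u$ ranging over $\Delta_n$. Using $d_s^\e(\phi^{-1}(u))=d_s(u)$ and expanding $\langle g,(1-n\e)u+\e\rangle=(1-n\e)\langle g,u\rangle+\langle g,\e\rangle$, the term $\langle g,\e\rangle$ is independent of $u$ and can be pulled outside the maximum. What remains is $\max_{u\in\Delta_n}\langle(1-n\e)g,u\rangle-d_s(u)$, which is exactly $d_s^*((1-n\e)g)$ by the definition of the conjugate of $d_s$. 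This yields the first claimed identity.

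For the gradient I would differentiate this closed form in $g$. Since $d_s$ is a DGF it is strongly convex, hence $d_s^*$ is differentiable with Lipschitz-continuous gradient, so the chain rule applies to $g\mapsto d_s^*((1-n\e)g)$ and contributes $(1-n\e)\nabla d_s^*((1-n\e)g)$, while the linear term $\langle g,\e\rangle$ contributes $\e$; adding these gives the second identity. Equivalently, one can apply Danskin's theorem to the parametric maximization over $\Delta_n^\e$, using that the maximizer is unique by strong convexity of $d_s$, and identify it as $\phi^{-1}\big(\nabla d_s^*((1-n\e)g)\big)=(1-n\e)\nabla d_s^*((1-n\e)g)+\e$.

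I do not expect a serious obstacle: once the change of variables is in place, the computation is routine. The only points requiring care are the bookkeeping of the scalar $\e$ standing in for $\e\mathbf{1}$ whenever it is added to a vector, verifying that the maximization domain transforms correctly under $\phi^{-1}$ so that one genuinely recovers $d_s^*$ evaluated at $(1-n\e)g$, and noting that $1-n\e>0$ is needed for $\phi$ to be well defined, which we assume throughout.
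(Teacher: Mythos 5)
Your proof is correct and follows exactly the route the paper intends: its own proof is the one-line remark ``follows by the definition of conjugate and the chain rule for gradients,'' and your change of variables $q^s=\phi^{-1}(u)$ in the maximization defining the conjugate, followed by differentiation of the resulting closed form, is just that argument carried out in full. The bookkeeping (the $\e$-as-$\e\mathbf{1}$ convention, the bijection of domains, and the requirement $1-n\e>0$) is all handled correctly.
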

\begin{proof}
%  Let $x=\phi(x')$. By the definition of the prox operator and using the chain rule for gradients we get
%\begin{align*}
%  & \nabla d_s^{\e,*}(g) = \arg\max_{x'\in \Delta_s^\e} \langle g, x' \rangle + d_s^\e(x') \\
%  & = \phi^{-1}(\arg\max_{x\in \Delta_s} \langle g , \phi^{-1}(x) \rangle + d_s^\e(\phi^{-1}(x)))  \\
%%  & = \phi^{-1}(\arg\max_{x\in \Delta_s} \langle g , (1-n\e)x+\e \rangle + d_s(x))  \\
%  & = \phi^{-1}(\arg\max_{x\in \Delta_s} \langle (1-n\e)g , x \rangle + d_s(x)) \\
%  & = \phi^{-1}( d_s^*((1-n\e)g).
%\end{align*}
%Similarly by applying the chain rule, but remembering the term $\langle g,\e \rangle$ that we dropped in the third equality, we get the formula for $d_s^{\e,*}(g)$.
Follows by the definition of conjugate and the chain rule for gradients.
\end{proof}
Thus computing our conjugate reduces to computing the conjugate for $d_s$
coupled with simple linear transformations. Hoda et.
al.~\shortcite{Hoda10:Smoothing} showed that the conjugate for a treeplex based
on a sum over dilated simplex DGFs is easy to compute. Combined with
Lemma~\ref{lem:simplex_prox}, their result shows that the conjugate of a
treeplex DGF consisting of a sum over dilated perturbed simplex DGFs is easy to
compute, as long as the same holds for the individual conjugates.

% PUT BACK IN FOR JOURNAL VERSION ???
% We now show that $d^\e_s$ is strongly convex. This result can be combined with the dilation approach of Hoda et. al.~\shortcite{Hoda10:Smoothing} to construct a DGF for a treeplex. 
% \begin{lemma}
%   For any $\alpha$-strongly convex simplex {\dgf} $d_s$, its $\e$-perturbed variant $d_s^\e$, is $\alpha(1-n\e)^{-2}$-strongly convex. 
% \end{lemma}
% \begin{proof}
%   Let $x=\phi(x'),x_0=\phi(x_0)$. Since $d_s$ is $\alpha$-strongly convex on
%   $\Delta_n$, we can use the strongly-convex variant of the subgradient
%   inequality and (\ref{eq:d_eps}) to get
% \begin{align}
%   &d_s(x) - d_s(x_0) \geq \langle \nabla d_s(x_0), x-x_0 \rangle +
%   \frac{\alpha}{2}\|x-x_0\|^2 \label{eq:sgie1}\\
%   &=  \langle \nabla d_s(x_0), (1-n\e)^{-1} (x'-x_0') \rangle +
%   \frac{\alpha}{2}\|x-x_0\|^2 \nonumber\\
%   &=  \langle (1-n\e)\nabla d_s^\e(x_0'), (1-n\e)^{-1} (x'-x_0') \rangle +
%   \frac{\alpha}{2}\|x-x_0\|^2 \nonumber\\
%   &=  \langle \nabla d_s^\e(x_0'), (x'-x_0') \rangle +
%   \frac{\alpha}{2}\|x-x_0\|^2 \nonumber\\
%   &\geq \langle \nabla d^\e(x_0'), x'-x_0' \rangle +
%   \frac{\alpha}{2(1-n\e^d)^2}\|x'-x_0'\|^2, \label{eq:sgie2}
% \end{align}
% The result follows by noting that (\ref{eq:sgie1}) and
% (\ref{eq:sgie2}) combined give exactly the
% $\frac{\alpha}{(1-n\e)^2}$-strongly-convex subgradient inequality for
% $d^\e(x)$, since $d^\e(x')=d(x)$ and $d^\e(x_0')=d(x_0)$.
% \end{proof}

We now focus on the case where $d_s$ is the entropy DGF for a simplex, that is, $d_s(q^s)=\sum_{i}q^s_i\log(q^s_i)$. Formally, we get the following DGF for a perturbed treeplex:
\begin{equation*}
\vspace{-1mm}
  d_Q^\e(q) = \sum_{j\in S_Q} \beta_j q_{p_j} \sum_{i\in \bbI_j} \frac{q_i/q_{p_j}-\e}{1-n_j\e_j}\log\left( \frac{q_i/q_{p_j}-\e}{1-n_j\e_j} \right)
%\vspace{-1mm}
\end{equation*}

Kroer et. al.~\shortcite{Kroer17:Theoretical} showed strong convexity and convergence results for the class of dilated entropy functions for treeplexes. We now show how their result can be leveraged to prove strong convexity bounds for the perturbed entropy DGF.
\begin{theorem}
  \label{the:strong_convexity}
  The dilated perturbed entropy DGF on a treeplex with weights that satisfy the following recurrence
  
  \begin{center}\begin{minipage}{.9\linewidth}
  \vspace{-1mm}
    \noindent$\displaystyle \alpha_j = 1 + \max_{i\in \bbI_j}\sum_{k \in \cD^i_j} \frac{\alpha_k\beta_k}{\beta_k - \alpha_k},\hfill \forall j \in S_Q,$\\
    \noindent$\displaystyle \beta_j > \alpha_j,\hfill\forall i\in \bbI_j \mbox{ and } \forall j \in S_Q ~\text{s.t.}~ b_Q^j > 0,$\\
    \noindent$\displaystyle\beta_j = \alpha_j,\hfill\forall i\in \bbI_j \mbox{ and } \forall j \in S_Q ~\text{s.t.}~ b_Q^j = 0.$\\
  \vspace{-1mm}
  \end{minipage}\end{center}

\noindent is strongly convex modulus $1$ with respect to the $\ell_2$ norm and modulus $\frac{1}{M_Q}$ with respect to the $\ell_1$ norm.
\end{theorem}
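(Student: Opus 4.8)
The plan is to reduce the statement to the known strong-convexity bound of Kroer et al.~\shortcite{Kroer17:Theoretical} for the unperturbed dilated entropy, via the Hessian criterion of Fact~\ref{fac:strong_convexity_hessian}. The key claim I would establish is that at every point $q\in(Q^\e)^\circ$ the Hessian of the dilated perturbed entropy $d_Q^\e$ dominates, in the positive-semidefinite order, the Hessian of the ordinary dilated entropy $d_Q$ with the \emph{same} weights $\beta_j$, evaluated at the \emph{same} point: $\nabla^2 d_Q^\e(q)\succeq\nabla^2 d_Q(q)$. I start with a single simplex. Since $d_s^\e=d_s\circ\phi$ with $\phi$ affine and $d_s(p)=\sum_i p_i\log p_i$, the chain rule (cf.~\eqref{eq:d_eps}) gives $\nabla^2 d_s^\e(q^s)=\diag\!\bigl(\tfrac{1}{(1-n\e)(q^s_i-\e)}\bigr)$ while $\nabla^2 d_s(q^s)=\diag(1/q^s_i)$; both are diagonal, so the domination reduces to $q^s_i\ge(1-n\e)(q^s_i-\e)$, i.e.\ $\e\bigl((1-n\e)+n q^s_i\bigr)\ge 0$, which holds on $\Delta_n^\e$ because non-emptiness of $\Delta_n^\e$ forces $n\e<1$. (In fact the inequality is strict, and both Hessians dominate the identity since $q^s_i\le 1$.) Hence $\nabla^2 d_s^\e(q^s)\succeq\nabla^2 d_s(q^s)$ on $(\Delta_n^\e)^\circ$.

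To lift this to the treeplex, the key structural fact is that dilating a simplex DGF, $q_{p_j}\,f(q^j/q_{p_j})$, has Hessian equal to a fixed positive congruence $\tfrac{1}{q_{p_j}}P_j^\top\nabla^2 f(q^j/q_{p_j})P_j$ of $\nabla^2 f$, where $P_j=[\,I\mid -q^j/q_{p_j}\,]$ depends only on the ratio $q^j/q_{p_j}$, not on $f$. Applying this with $f=d_s^\e$ and with $f=d_s$ uses the same $P_j$ and the same ratio, so the simplex-level domination passes through each dilation; since positive congruence by a fixed matrix, scaling by $\beta_j>0$, zero-padding into the ambient coordinate space, and summation over $j\in S_Q$ all preserve the semidefinite order, I obtain $\nabla^2 d_Q^\e(q)\succeq\nabla^2 d_Q(q)$ for all $q\in(Q^\e)^\circ$. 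I would also record the domain inclusion $(Q^\e)^\circ\subseteq Q^\circ$: on $(Q^\e)^\circ$ every simplex-ratio coordinate of $q$ is at least $\e>0$, so $d_Q$ is differentiable there, which is exactly what is needed to invoke the unperturbed bound at those points.

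Finally, Kroer et al.~\shortcite{Kroer17:Theoretical} prove the corresponding statement for $d_Q$ with weights satisfying the same recurrence, in the form $h^\top\nabla^2 d_Q(q)\,h\ge\varphi\|h\|^2$ for all $h\in\R^n$ and $q\in Q^\circ$, with $\varphi=1$ for $\|\cdot\|_2$ and $\varphi=1/M_Q$ for $\|\cdot\|_1$. Chaining this with the domination on $(Q^\e)^\circ$ and applying Fact~\ref{fac:strong_convexity_hessian} to $C=Q^\e$ yields moduli $1$ and $1/M_Q$ for $d_Q^\e$ (in fact $1/M_Q$ can be replaced by the larger $1/M_{Q^\e}$ since $Q^\e\subseteq Q$). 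The only delicate point is the treeplex bookkeeping in the second paragraph — that the dilation congruence matrix is perturbation-independent, so nesting of dilations does not disturb the semidefinite order; the simplex computation and the appeal to the prior result are routine.
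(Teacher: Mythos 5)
Your proof is correct, and it establishes the same pivotal inequality as the paper's own argument --- namely that for $q$ in the relative interior of $Q^\e$ the quadratic form $h^\top\nabla^2 d_Q^\e(q)h$ is bounded below by $h^\top\nabla^2 d_Q(q)h$, the quadratic of the \emph{unperturbed} dilated entropy with the same weights $\beta_j$ at the same point --- after which both arguments conclude identically by citing the strong-convexity theorems of Kroer et al.\ together with Fact~\ref{fac:strong_convexity_hessian}. The difference lies in how that domination is derived. The paper computes every second-order partial of the full treeplex function (its equations \eqref{eq:partial_derivative_diagonal} and \eqref{eq:partial_derivative_off_diagonal}), reassembles the double sum by reindexing child simplexes to their parents, and then bounds the resulting per-simplex bracket using nonnegativity of a convex-in-$h_i$ expression together with $q_i/(q_i-\e q_{p_j})\ge 1$ and $1/(1-n_j\e_j)\ge 1$. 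You instead localize the comparison to a single simplex, where it is the one-line diagonal inequality $\tfrac{1}{(1-n\e)(q^s_i-\e)}\ge\tfrac{1}{q^s_i}$ (valid since both denominators are positive on a nondegenerate $\Delta_n^\e$), and propagate it through the treeplex by noting that dilation acts on Hessians as the congruence $\tfrac{1}{q_{p_j}}P_j^\top(\cdot)P_j$ with $P_j$ independent of the function being dilated, so the semidefinite order survives dilation, positive weighting, zero-padding, and summation. This is more modular: it isolates the only place the perturbation matters, avoids the global rearrangement of sums, and would extend verbatim to any simplex DGF whose perturbed Hessian dominates its unperturbed one at the same point, whereas the paper's computation is tied to the entropy. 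Your domain bookkeeping (the ratio $q^j/q_{p_j}$ lies in $\Delta_{n_j}^\e$ by definition of $Q^\e$, and $(Q^\e)^\circ\subseteq Q^\circ$ so the unperturbed bound is available where needed) is sound. One small caveat: your parenthetical claim that the $\ell_1$ modulus improves to $1/M_{Q^\e}$ does not follow from merely citing Kroer et al.'s theorem for $Q$; it would require their pointwise bound restricted to $Q^\e$, so it should either be justified separately or dropped --- it is not needed for the stated theorem.
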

\begin{proof}
  We will show that the quadratic over the Hessian of $d_Q^\e$ can be expressed
  as a constant times the quadratic over the unperturbed dilated entropy DGF for
  $Q$. This will allow us to invoke the strong convexity theorem of Kroer et.
  al.~\shortcite{Kroer17:Theoretical}.

  Consider $q\in \ri(Q^\e)$ and any $h\in\R^n$. For each $j\in S_Q$ and
  $i\in\bbI_j$, the second-order partial derivates of $d_Q^\e(\cdot)$ with
  respect to $q_i$ are:
  \begin{align}
  \vspace{-1mm}
    \nabla_{q_i^2}^2 d_s^\e(q)=& \frac{\beta_j}{(1-n_j\e_j)(q_i-\e q_{p_j})}\nonumber \\
    &+ \sum_{k \in \cD^i_j}\sum_{l\in \bbI_{k}}\frac{\beta_{k}q_{l}^2}{(1-n_k\e_k)(q_l-\e q_i)q_i^2}\label{eq:partial_derivative_diagonal}
    % =& \frac{1}{(1-n_j\e_j)(q_i-\e q_{p_j})}\left(\frac{\beta_j}{q_i} + \sum_{k \in \cD^i_j} \frac{\beta_{k}q_l}{q_i}\right), \label{eq:partial_derivative_diagonal}
         \vspace{-1mm}
  \end{align}
  % where the last equality holds because $k\in\cD^i_j$ and thus $\sum_{l\in
  %   \bbI_{k}} q_l =
  % \|q^k\|_1=q_{p_k}=q_i$. %$\frac{q^k}{q_i}=\bar{q}^k\in \Delta^k$.
  Also, for each $j\in S_Q,i\in\bbI_j$, the
  second-order partial derivates with respect to $q_i,q_{p_j}$ are given by:
  \begin{equation}
    \vspace{-1mm}
    \nabla_{q_i,q_{p_j}}^2 d_s^\e(q)= \nabla_{q_{p_j},q_{i}}^2 d_s^\e(q)
    %= \omega_{q_l,q_{i}}''(q)
    = -\frac{\beta_{j}q_i}{(1-n_j\e)(q_i-\e q_{p_j})q_{p_j}}.
    \label{eq:partial_derivative_off_diagonal}
         \vspace{-1mm}
  \end{equation}
  Then equations~\eqref{eq:partial_derivative_diagonal} and~\eqref{eq:partial_derivative_off_diagonal} together imply
  \begin{align}
    \vspace{-1.5mm}
     & h^\top\nabla^2\omega(q)h 
       =  \sum_{j\in S_Q}\sum_{i\in \bbI_j} \left[ h_i^2 \left( \frac{\beta_j}{(1-n_j\e_j)(q_i-\e q_{p_j})}  \right.\right.\nonumber\\[-2mm]
     &\quad \left. + \sum_{k \in \cD^i_j} \sum_{l\in \bbI_{k}}\frac{\beta_{k}q_{l}^2}{(1-n_k\e_k)(q_l-\e q_i)q_i^2} \right) \nonumber\\[-2mm]
    &\quad \left.-   h_ih_{p_j}\frac{2\beta_{j}q_i}{(1-n_j\e)(q_i-\e q_{p_j})q_{p_j}}   \right] . \label{eq:hessian_quadratic_unsimplified} %\\
      \vspace{-1.5mm}
  \end{align}
Given $j\in S_Q$ and $i\in\bbI_j$, we have $p_k=i$ for each $k\in \cD_j^i$ and for any $k\in \cD_j^i$, there exists some other $j'\in S_Q$ corresponding to $k$  in the outermost summation. Then we can rearrange the following terms:
  \begin{align*}
    \vspace{-1mm}
    &\sum_{j\in S_Q}\sum_{i\in \bbI_j} h_i^2 \sum_{k \in \cD^i_j} \sum_{l \in \bbI_k} \frac{\beta_{k}q_{l}^2}{(1-n_k\e_k)(q_l-\e q_i)q_i^2}\\[-2mm]
    &= \sum_{j\in S_Q}\sum_{i \in \bbI_j} \beta_{j}\frac{h_{p_j}^2q_{i}^2}{(1-n_j\e_j)(q_i-\e q_{p_j})q_{p_j}^2}.
      \vspace{-1mm}
    \end{align*}
  %   \begin{align*}
  %   & \sum_{j\in S_Q}\sum_{i\in \bbI_j}  \sum_{k \in \cD^i_j}\sum_{l \in \bbI_{k}} h_ih_{l}\\
  %   &= \sum_{j\in S_Q}\sum_{i\in \bbI_j}\beta_j\frac{q_}{(1-n_j\e)(q_i-\e q_{p_j})q_{p_j}}\frac{2h_ih_{p_j}}{q_{p_j}}.
  % \end{align*}
    Using this equality in \eqref{eq:hessian_quadratic_unsimplified} leads to
    \begin{align}
      \vspace{-1mm}
      &\eqref{eq:hessian_quadratic_unsimplified} = 
       \sum_{j\in S_Q}\sum_{i\in \bbI_j} \left[   \frac{\beta_jh_i^2}{(1-n_j\e_j)(q_i-\e q_{p_j})}  \right.\nonumber\\[-1mm]
      & \left.
        +  \frac{\beta_{j}h_{p_j}^2q_{i}^2}{(1-n_j\e_j)(q_i-\e q_{p_j})q_{p_j}^2}
        -   \frac{2\beta_{j}h_ih_{p_j}q_i}{(1-n_j\e)(q_i-\e q_{p_j})q_{p_j}} 
        \right] \nonumber\\[-1mm]
      & = \sum_{j\in S_Q}\sum_{i\in \bbI_j}\frac{\beta_jq_i
        \left(   \frac{h_i^2}{q_i} 
        +  \frac{h_{p_j}^2q_{i}}{q_{p_j}^2}
        -   \frac{2h_ih_{p_j}}{q_{p_j}} 
        \right)
        }{(1-n_j\e_j)(q_i-\e q_{p_j})}
        \label{eq:hessian2}
          \vspace{-2mm}
    \end{align}

    Now we can view the three terms inside the brackets as a convex function of $h_i$.
    % $
    % g(h_i)=
    %     \left[   \frac{h_i^2}{q_i} 
    %     +  \frac{h_{p_j}^2q_{i}}{q_{p_j}^2}
    %     -   \frac{2h_ih_{p_j}}{q_{p_j}} 
    %     \right]
    % $.
    First-order optimality implies that this function is nonnegative. 
    % Taking the derivative and setting it to zero gives $h_i=\frac{q_i}{q_{p_j}}h_{p_j}$. Since $g(h_i)$ is convex our expression for $h_i$ is a minimizer, so we get 
    % $
    %   g(h_i)  \geq h_{p_j}^2\left[ \frac{q_i}{q_{p_j}^2}+ \frac{q_i}{q_{p_j}^2} - 2 \frac{q_i}{q_{p_j}^2}\right] = 0.
    % $
    Furthermore, since $q_i \ge \e q_{p_j}$ we have $\frac{q_i}{q_i-\e q_{p_j}}\ge 1$. Combined, this gives
    \begin{align}
         \vspace{-1mm}
      \eqref{eq:hessian2}
       &\ge \sum_{j\in S_Q}\sum_{i\in \bbI_j}\frac{\beta_j
}{(1-n_j\e_j)}
        \left(   \frac{h_i^2}{q_i} 
        +  \frac{h_{p_j}^2q_{i}}{q_{p_j}^2}
        -   \frac{2h_ih_{p_j}}{q_{p_j}} 
        \right) \nonumber \\[-1mm]
       &\ge \sum_{j\in S_Q}\beta_j
         \left[\sum_{i\in \bbI_j}
        \left(   \frac{h_i^2}{q_i} 
        -   \frac{2h_ih_{p_j}}{q_{p_j}} 
        \right)
         +  \frac{h_{p_j}^2}{q_{p_j}}
         \right]
      \label{eq:hessian3}
         \vspace{-1mm}
    \end{align}
    The last step follows because $\frac{q_i}{q_{p_j}}$ form simplex weights. By
    Lemma~1 in Kroer et. al.~\shortcite{Kroer17:Theoretical} this is exactly the
    expression for the quadratic of the Hessian of the unperturbed dilated
    entropy function on $Q$ with weights $\beta_j$. Since our weights satisfy
    the requirements in Theorems~1 and~2 of Kroer et. al., the unperturbed
    dilated entropy function with these weights is strongly convex on $Q$, and
    thus we get \eqref{eq:hessian3} $\geq c\|h\|^2$ where $c=1$ when $\|\cdot\|$
    is the $l_2$ norm (by Theorem 1 of Kroer et. al.) and $c=\frac{1}{M_Q}$ when
    $\|\cdot\|$ is the $l_1$ norm (by Theorem 2 of Kroer et. al.). By
    Fact~\ref{fac:strong_convexity_hessian} this proves our theorem.
\end{proof}

Using Theorem~\ref{the:strong_convexity} we can use the perturbed dilated entropy function to instantiate EGT. Since the value of the perturbed entropy on $\Delta_n^\e$ can be lower-bounded by $\log(n)$ exactly the same way as with the unperturbed entropy, we can apply Theorem~3 of Kroer et. al.~\shortcite{Kroer17:Theoretical}, to bound EGT convergence rate as follows:

\begin{theorem}\label{the:entropy_diameter}
  For a perturbed treeplex $Q^\e$, the dilated perturbed entropy function with simplex weights $\beta_j=M_Q(2+\sum_{r=1}^{d_j}2^{r}(M_{Q_j,r}-1))$ for each $j\in S_Q$ results in  
$
  \frac{\Omega}{\varphi} \leq M_Q^2 2^{d_Q+2}\log m
$ where $m$ is the dimension of the largest simplex $\Delta^j$ for $j\in S_Q$ in the treeplex structure. 
\end{theorem}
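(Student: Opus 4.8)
The plan is to reduce this bound to the analogous statement for the \emph{unperturbed} dilated entropy DGF on $Q$, namely Theorem~3 of Kroer et al.~\shortcite{Kroer17:Theoretical}. The ratio $\Omega/\varphi$ that governs the EGT convergence rate needs two estimates: a lower bound on the strong-convexity modulus $\varphi$ of $d_Q^\e$, and an upper bound on its ``diameter'' $\Omega$, i.e.\ the range $\max_{q\in Q^\e}d_Q^\e(q)-\min_{q\in Q^\e}d_Q^\e(q)$. Two structural facts will drive the reduction. First, on each simplex $j$ the perturbed entropy $d_s^\e$ is the entropy DGF $d_s$ precomposed with the affine bijection $\phi$ carrying $\Delta_{n_j}^\e$ onto $\Delta_{n_j}$, so $d_s^\e$ has the \emph{same} image over $\Delta_{n_j}^\e$ as $d_s$ has over $\Delta_{n_j}$. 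Second, $Q^\e\subseteq Q$, so the combinatorial size parameters $M_Q$, $M_{Q,r}$ and $d_Q$ used in the unperturbed bound remain valid upper bounds when one maximizes over $Q^\e$.

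First I would dispose of $\varphi$. The weights $\beta_j=M_Q(2+\sum_{r=1}^{d_j}2^r(M_{Q_j,r}-1))$ are exactly the closed-form solution Kroer et al.\ exhibit for the weight recurrence, and the recurrence in Theorem~\ref{the:strong_convexity} (together with its auxiliary sequence $\alpha_j$) is literally the unperturbed one, so that verification transfers verbatim. Hence Theorem~\ref{the:strong_convexity} applies with these weights and shows that $d_Q^\e$ is strongly convex with modulus $\varphi=1/M_Q$ with respect to $\|\cdot\|_1$; equivalently, $1/\varphi\le M_Q$.

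Next I would bound $\Omega$. Writing $d_Q^\e(q)=\sum_{j\in S_Q}\beta_j q_{p_j}d_s^\e(q^j/q_{p_j})$, the first fact above says the image of $d_s^\e$ over $\Delta_{n_j}^\e$ is $[-\log n_j,0]\subseteq[-\log m,0]$, the minimum $-\log n_j$ being attained at the uniform point; in particular $|d_s^\e(q^j/q_{p_j})|\le\log m$. Since each summand of $d_Q^\e$ is nonpositive and all of them vanish when every simplex sits at an extreme point of its perturbed simplex, $\max_{q\in Q^\e}d_Q^\e(q)=0$, while $-\min_{q\in Q^\e}d_Q^\e(q)\le \log m\cdot\max_{q\in Q}\sum_{j\in S_Q}\beta_j q_{p_j}$, so $\Omega\le\log m\cdot\max_{q\in Q}\sum_{j\in S_Q}\beta_j q_{p_j}$. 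The remaining estimate of the weighted reach-probability sum $\max_{q\in Q}\sum_{j\in S_Q}\beta_j q_{p_j}$ --- obtained by grouping the simplexes of $Q$ by the number of preceding branchings $b_Q^j$ and telescoping the terms $2^r(M_{Q_j,r}-1)$ against the quantities $M_{Q,r}$ --- is precisely the computation in the proof of Theorem~3 of Kroer et al., and gives $\max_{q\in Q}\sum_{j\in S_Q}\beta_j q_{p_j}\le M_Q\,2^{d_Q+2}$. Combining with $1/\varphi\le M_Q$ yields $\Omega/\varphi\le M_Q^2\,2^{d_Q+2}\log m$.

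I do not expect a genuine obstacle: with Theorem~\ref{the:strong_convexity} in hand, the argument is a bookkeeping reduction to Kroer et al.'s Theorem~3, since the weights, the treeplex shape, and all of $M_Q$, $M_{Q,r}$, $d_Q$ are untouched by the perturbation and $Q^\e\subseteq Q$ only shrinks the feasible set. The single place the perturbation is visible is the per-simplex range, where the identity $d_s^\e=d_s\circ\phi$ makes it coincide exactly with the unperturbed range --- so the point that needs care is just confirming that the minimum of $d_s^\e$ over $\Delta_{n_j}^\e$ is still $-\log n_j$ (and attained), which is what legitimizes reusing Kroer et al.'s diameter bound.
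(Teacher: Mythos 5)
Your proposal is correct and follows essentially the same route as the paper, which simply observes that the perturbed entropy on $\Delta_n^\e$ has the same range as the unperturbed entropy on $\Delta_n$ (via the affine bijection $\phi$) and then invokes Theorem~\ref{the:strong_convexity} together with Theorem~3 of Kroer et al.\ verbatim. You merely spell out the bookkeeping (the $\max=0$ / $\min\ge-\log m\cdot\sum_j\beta_j q_{p_j}$ split and the weighted reach-probability estimate) that the paper leaves implicit.
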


Theorem~\ref{the:entropy_diameter} immediately leads to the following convergence rate result for EGT equipped with dilated perturbed entropy {\dgf}s to solve perturbed {\efg}s. 
\begin{theorem}\label{thm:D-EntropyRate}
  The {\egt} algorithm equipped with the dilated perturbed entropy {\dgf} with
  weights $\beta_{j}=2+ \sum_{r=1}^{d_j}2^{r}(M_{\cX_{j},r}-1)$ for all $j \in
  S_{\cX}$ and the corresponding setup for $\cY$ will return a
  $\epsilon$-accurate solution to the perturbed variant of
  \eqref{eq:sequence_form_objective} in at most the following number of
  iterations:
\[
\vspace{-1mm}
  \left(\max_{i,j}|A_{i,j}|\, \sqrt{M_\cX^22^{d_\cX+2}M_\cY^22^{d_\cY+2}}\, \log m\right) / \epsilon,
\vspace{-1mm}
\]
where the matrix norm is given by: \[
\|A\|=\max_{y\in \cY}\left\{ \| Ay \|_1^* :~ \|y\|_1=1\right\} = \max_{i,j}
|A_{i,j}|.\vspace{-1mm}\]
\end{theorem}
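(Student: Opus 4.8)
The statement is a corollary: it follows by substituting the strong-convexity and diameter bounds already established into the generic \egt{} convergence guarantee and then simplifying the operator norm $\|A\|$. No new analysis is needed; the work is bookkeeping of norms.

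\textbf{Step 1 (instantiate \egt{} on the perturbed polytopes).} I would first observe that $\cX^\e$ and $\cY^\e$ are again convex polytopes and that, by Lemma~\ref{lem:simplex_prox} together with Theorem~\ref{the:strong_convexity}, the dilated perturbed entropy $d_\cX^\e$ (resp.\ $d_\cY^\e$) is a valid \dgf{} on $\cX^\e$ (resp.\ $\cY^\e$) --- strongly convex and with an efficiently computable conjugate. Hence \egt{} can be run verbatim on the perturbed variant of \eqref{eq:sequence_form_objective} (the one obtained by replacing $\cX,\cY$ with $\cX^\e,\cY^\e$), and the convergence bound stated just before Section~\ref{sec:treeplexes} applies: after $T$ iterations the iterate $z^T=[x^T;y^T]$ satisfies $x^T\in\cX^\e$, $y^T\in\cY^\e$, and $\epsilonsad(z^T)\le \frac{4\|A\|}{T+1}\sqrt{\Omega_\cX\Omega_\cY/(\varphi_\cX\varphi_\cY)}$, which is exactly an $\epsilon$-accurate solution once the right-hand side is at most $\epsilon$. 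All of $\varphi_\cX,\varphi_\cY$, the \dgf{} ranges $\Omega_\cX,\Omega_\cY$, and $\|A\|$ are taken with respect to the $\ell_1$ norm on each strategy space and its dual $\ell_\infty$ norm on the image, because that is the regime in which Theorem~\ref{the:entropy_diameter} supplies bounds.

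\textbf{Step 2 (plug in the diameter bound).} The present theorem uses the same family of simplex weights as Theorem~\ref{the:entropy_diameter}; the two differ only by an overall positive rescaling (the factor $M_\cX$), which scales both $\Omega$ and $\varphi$ identically and therefore leaves the ratio $\Omega/\varphi$ --- the only thing the \egt{} rate depends on through the \dgf{}s --- unchanged. So Theorem~\ref{the:entropy_diameter} gives $\Omega_\cX/\varphi_\cX\le M_\cX^2 2^{d_\cX+2}\log m$ and, by the symmetric construction on $\cY$, $\Omega_\cY/\varphi_\cY\le M_\cY^2 2^{d_\cY+2}\log m$, with $m$ the dimension of the largest simplex across both treeplexes. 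Substituting into the Step~1 bound and rearranging $\epsilonsad(z^T)\le\epsilon$ for $T$ gives the claimed iteration count $\bigl(\max_{i,j}|A_{i,j}|\,\sqrt{M_\cX^2 2^{d_\cX+2}M_\cY^2 2^{d_\cY+2}}\,\log m\bigr)/\epsilon$, once the numerical constant from the \egt{} bound is absorbed.

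\textbf{Step 3 (identify $\|A\|$) and main obstacle.} Finally I would verify the displayed identity for the matrix norm. The middle quantity $\max_{y\in\cY}\{\|Ay\|_1^*:\|y\|_1=1\}$ is simply the definition of the operator norm entering the \egt{} analysis --- $\ell_1$ on the domain $\cY$, dual norm $\ell_1^*=\ell_\infty$ on the image. Writing $A_i$ for the $i$-th row, $\|Ay\|_\infty=\max_i|\langle A_i,y\rangle|$, and the maximum of the linear functional $y\mapsto|\langle A_i,y\rangle|$ over the $\ell_1$ unit ball is attained at a signed standard basis vector, hence equals $\max_j|A_{i,j}|$; maximizing over $i$ yields $\max_{i,j}|A_{i,j}|$. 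The only genuine subtlety --- and what I expect to be the crux of getting the statement exactly right, rather than any deep argument --- is the consistency of norm conventions: one must use throughout the $\ell_1$ strong-convexity moduli of Theorem~\ref{the:strong_convexity}, the matching $\ell_1$-flavored diameter bound of Theorem~\ref{the:entropy_diameter}, and the $\ell_1\!\to\!\ell_\infty$ operator norm of $A$, and confirm that the weight choice in the present theorem is indeed the one for which those bounds were proved (up to the harmless rescaling noted above).
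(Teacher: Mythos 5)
Your proposal is correct and follows essentially the same route as the paper, which presents this theorem as an immediate consequence of the \egt{} convergence guarantee $\epsilonsad(z^T)\le \frac{4\|A\|}{T+1}\sqrt{\Omega_\cX\Omega_\cY/(\varphi_\cX\varphi_\cY)}$ combined with the $\Omega/\varphi$ bound of Theorem~\ref{the:entropy_diameter} and the $\ell_1\!\to\!\ell_\infty$ identification of $\|A\|$. Your remark that the weights here differ from those in Theorem~\ref{the:entropy_diameter} only by the overall factor $M_\cX$, which cancels in the ratio $\Omega/\varphi$, is a point the paper leaves implicit and is handled correctly.
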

\noindent To our knowledge, this is the first result for FOMs that compute an approximate Nash equilibrium refinement.
\section{Experiments}
We conducted experiments to investigate the practical performance of our
smoothing approach when used to instantiate the EGT algorithm. We compare EGT
with our smoothing approach to EGT on an unperturbed polytope using the
smoothing technique by Kroer et. al.~\shortcite{Kroer17:Theoretical} and
CFR+~\cite{Tammelin15:Solving}. 
We conducted the experiments on Leduc hold'em poker~\cite{Southey05:Bayes}, a
widely-used benchmark in the imperfect-information game-solving community,
except we tested on a larger variant of the game in order to better test
scalability. In our enlarged version, \emph{Leduc 5}, the deck consists of $5$
pairs of cards $1\ldots 5$, for a total deck size of $10$. Each player initially
pays one chip to the pot, and is dealt a single private card. After a round of
betting, a community card is dealt face up. After a subsequent round of betting,
if neither player has folded, both players reveal their private cards. If either
player pairs their card with the community card they win the pot. Otherwise, the
player with the highest private card wins. In the event that both players have
the same private card, they draw and split the pot. Kroer et.
al.~\shortcite{Kroer17:Theoretical} point out that the theoretically sound scale
at which the overall weight on the {\dgf} should be set is too conservative. We
tune an overall weight on each {\dgf} by choosing the weight that performs best
with $EGT$ and $\e=0$ among $1,0.1,0.05,0.01,0.005$ on the first 20 iterations.
We test our approach on $\e$-perturbed polytopes of the strategy spaces
for $\e \in \left\{ 0.1,0.05, 0.01, 0.005, 0.001 \right\}$.

The first experiment measures convergence to Nash equilibrium
(Figure~\ref{fig:experiments_leduc5_eps}). The x-axis shows the number of tree
traversals performed per algorithm\footnote{Game tree traversals are equally
  expensive for all the algorithms studied. Treeplex traversal for each player is slower
  in EGT than CFR due to requiring exponentiation $\exp(\,\cdot\,)$, but
  the algorithms spend significantly less time on treeplex traversals than tree
  traversals, so this difference between the algorithms is insignificant.}. The y-axis shows the sum of player regrets in the full
(unperturbed) game. We find that the $\e$ perturbations have almost no effect on
overall convergence rate until convergence within the perturbed polytope, at
which point the regret in the unperturbed game stops decreasing, as expected.
This shows that our approach can be utilized in practice: there is no
substantial loss of convergence rate. Later in the run once the perturbed
algorithms have bottomed out, there is a tradeoff between exploitability in the
full game and refinement (i.e., better performance in low-probability
information sets).
\begin{figure}[]
  %\centering
    %\includegraphics[width=0.49\textwidth]{plots/leduc5_eps_unperturbed.pdf}

%     \begin{tikzpicture}[scale=.80]
%	     \input{../plots/regr_unpert_leduc5}
%     \end{tikzpicture}
%     
     	\centering\includegraphics[height=4.5cm]{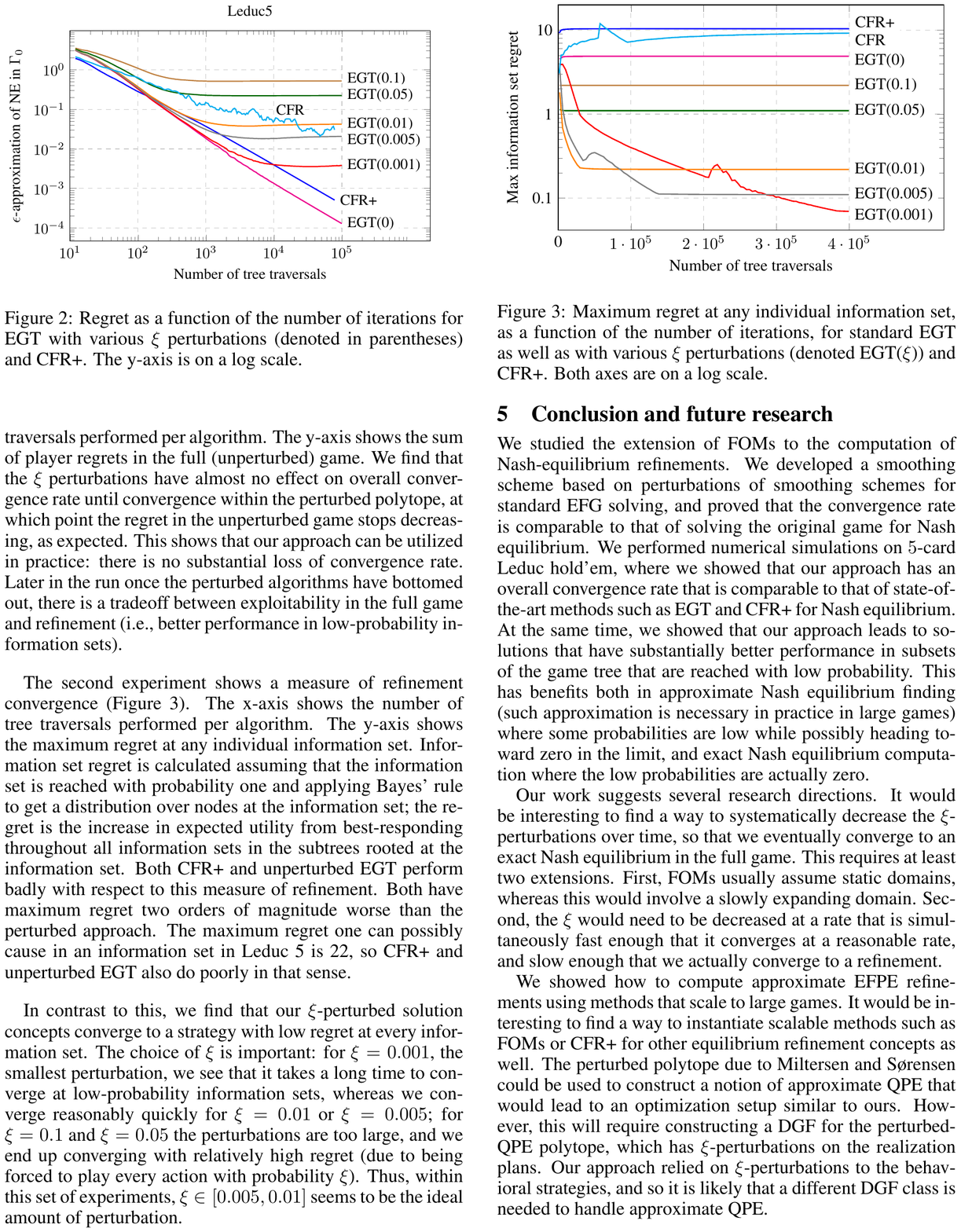}
 %\vspace{-6mm}
    \caption{Regret as a function of the number of iterations for {\egt} with
      various $\e$ perturbations (denoted in parentheses) and CFR+. Both
      axes are on a log scale.}
  \label{fig:experiments_leduc5_eps}
%  \vspace{-6mm}
\end{figure}

The second experiment shows a measure of refinement convergence (Figure~\ref{fig:experiments_leduc5_max_regret}). The x-axis shows the
number of tree traversals performed. The y-axis shows the
maximum regret at any individual information set. Information set regret
is calculated assuming that the information set is reached with probability one
and applying Bayes' rule to get a distribution over nodes at the information set;
the regret is the increase in expected utility from best-responding
throughout all information sets in the subtrees rooted at the information set.
%??? In Leduc~$5$ w
Both CFR+ and unperturbed EGT perform badly with
respect to this measure of refinement. Both have maximum regret two orders of magnitude worse than the perturbed approach. The maximum regret one can possibly cause in an information set in Leduc 5 is 22, 
so CFR+ and unperturbed EGT also do poorly in that sense.
In contrast to this, we find that our $\e$-perturbed
solution concepts converge to a strategy with low regret at every information
set. The choice of $\e$ is important: for $\e=0.001$, the smallest perturbation,
we see that it takes a long time to converge at low-probability information
sets, whereas we converge reasonably quickly for $\e=0.01$ or $\e=0.005$; for
$\e=0.1$ and $\e=0.05$ the perturbations are too large, and we end up converging
with relatively high regret (due to being forced to play every action with
probability $\e$). Thus, within this set of experiments, $\e\in
\left[0.005,0.01\right]$ seems to be the ideal amount of perturbation. 

\begin{figure}[]
  \centering
    % \includegraphics[width=0.49\textwidth]{plots/leduc5_max_infoset_regret.pdf}

     %\begin{tikzpicture}[scale=.80]
	 %    \input{../plots/max_regr_leduc5}
     %\end{tikzpicture}

	\includegraphics[height=4.5cm]{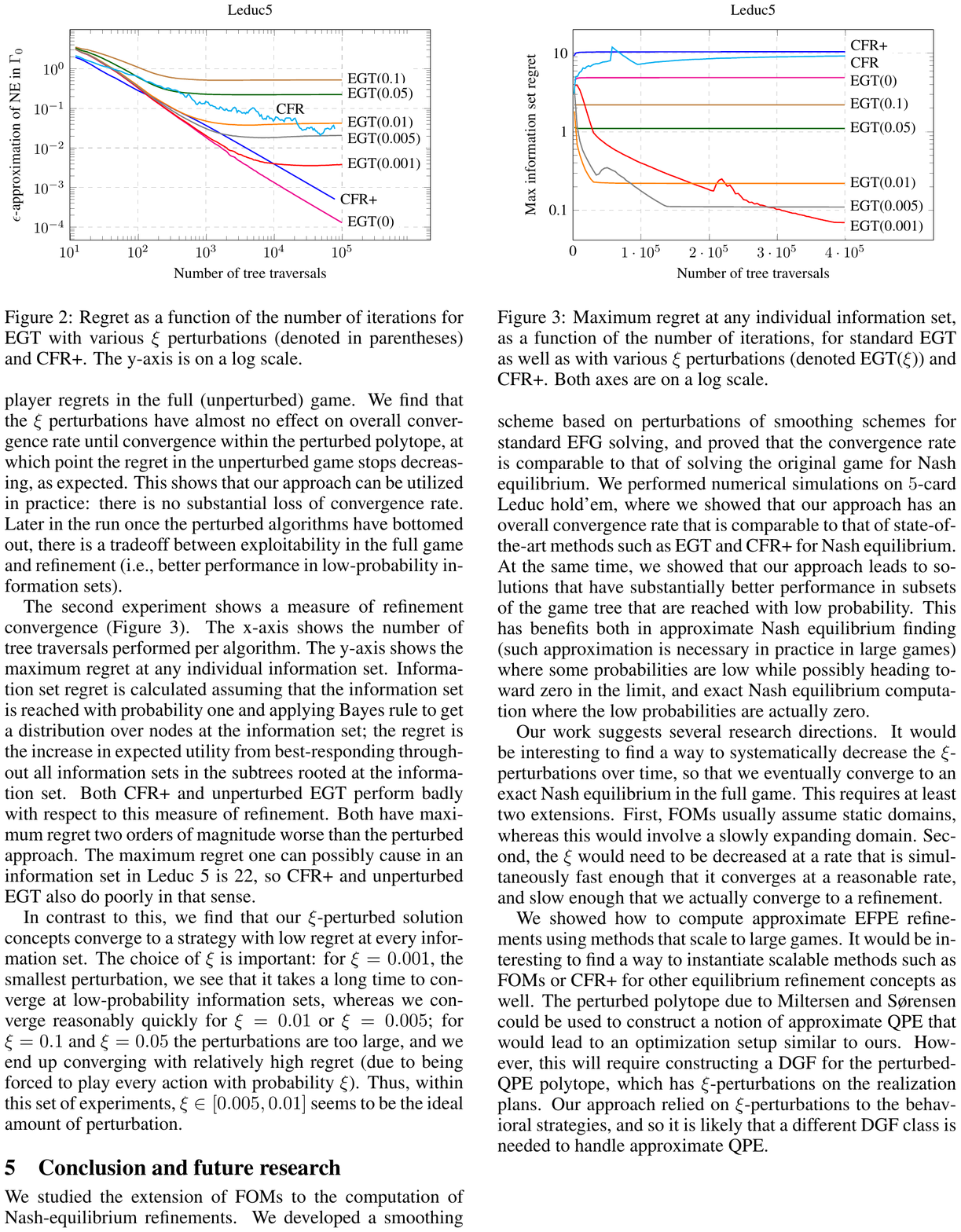}
 %\vspace{-2mm}
    \caption{Maximum regret at any individual information set, as a function of
      the number of iterations.}
  \label{fig:experiments_leduc5_max_regret}
    \vspace{-1mm}
\end{figure}

\section{Conclusion and future research}

We studied the extension of FOMs to the computation of Nash-equilibrium
refinements. We developed a smoothing scheme based on perturbations of smoothing
schemes for standard EFG solving, and proved that the convergence rate is
comparable to that of solving the original game for Nash equilibrium. We
performed numerical simulations where we showed that our approach has an overall
convergence rate that is comparable to that of state-of-the-art Nash equilibrium
methods. At the same time, we showed that our approach leads to solutions that
have substantially better performance in subsets of the game tree that are
reached with low probability. This has benefits both in approximate Nash
equilibrium finding (such approximation is necessary in practice in large games)
where some probabilities are low while possibly heading toward zero in the
limit, and exact Nash equilibrium computation where the low probabilities are
actually zero.

Our work suggests several research directions. It
would be interesting to find a way to systematically decrease the $\e$-perturbations over time, so that we eventually converge to an exact Nash
equilibrium in the full game. This requires at least two extensions. First, FOMs usually assume static domains, whereas this would involve a
slowly expanding domain. Second, the $\e$ would need to be decreased at a rate
that is simultaneously fast enough that it converges at a reasonable rate, and
slow enough that we actually converge to a refinement.

We showed how to compute approximate EFPE refinements using methods that scale to large games. It would be interesting to find a way to instantiate scalable methods such as FOMs or CFR+ for other equilibrium refinement concepts as well. The perturbed polytope due to Miltersen and S\o rensen could be used to construct a notion of approximate QPE that would lead to an optimization setup similar to ours. However, this will require constructing a DGF for the perturbed-QPE polytope, which has $\e$-perturbations on the realization plans. Our approach relied on $\e$-perturbations to the behavioral strategies, and so it is likely that a different DGF class is needed to handle approximate QPE. 

%\vspace{-2mm}
\section*{Acknowledgments}
This work was supported by NSF grants
IIS-1617590, IIS-1320620, IIS-1546752 and ARO award
W911NF-17-1-0082.
The first author is supported by a Facebook Fellowship.

%% The file named.bst is a bibliography style file for BibTeX 0.99c
\clearpage
\small
\bibliographystyle{named}
\bibliography{dairefs}
[ACPC] \texttt{http://www.computerpokercompetition.org}

\end{document}